\documentclass[11pt]{amsart}
\def\isdraft{0}

\usepackage{txfonts}
\usepackage[dvipsnames]{xcolor}
\usepackage{bbm}
\usepackage{graphicx}
\usepackage{bbm}
\usepackage{amsaddr} % remove in amsbook, JAIR, IOS-Press
\usepackage{mathtools}
\usepackage{amsmath,amssymb,amsfonts,dsfont}
\usepackage{prettyref} % achtung: position dieses kommandos scheint essentiell zu sein
\usepackage[utf8]{inputenc}
\usepackage{enumitem}
\usepackage[hyphens]{url}
\usepackage{tikz-cd}
\usepackage{tikz}
\usetikzlibrary{positioning}
\usetikzlibrary{arrows}
\usepackage{bussproofs}
\usepackage[mathscr]{euscript}
\usepackage{hyperref} % JAIR remove
\usepackage{amsthm}
\usepackage{theapa}
\usepackage{a4wide}
\usepackage[color=black,textcolor=white\if\isdraft0,disable\fi]{todonotes}
\usepackage{stackengine}
\usepackage[normalem]{ulem}
\usepackage{stmaryrd}
\usepackage{wrapfig}
\usepackage{pdfpages}
\usepackage{marginnote}
\usepackage{float}
\graphicspath{{fig/}}
\usetikzlibrary{arrows,automata,topaths,matrix,positioning,fit}
\usepgflibrary{shapes.geometric}
\usetikzlibrary{shapes.geometric}
\tikzset{every state/.style={minimum size=0pt}}
\usepackage{footnote}
\usepackage{float}
\usepackage{tabularx}
% \makesavenoteenv{tabular}
% \makesavenoteenv{table}
\usepackage{pifont} % todolist

\newlist{todolist}{itemize}{2}
\setlist[todolist]{label=$\square$}

\newtheorem{theorem}{Theorem}

\newtheorem{corollary}[theorem]{Corollary}
\newtheorem{fact}[theorem]{Fact}

\theoremstyle{definition} % TPLP remove

\newtheorem{definition}[theorem]{Definition}

\newtheorem{example}[theorem]{Example}

\newtheorem{remark}[theorem]{Remark}

\newrefformat{§}{§\ref{#1}}
\newrefformat{algorithm}{Algorithm \ref{#1}}
\newrefformat{a}{Answer \ref{#1}}
\newrefformat{appendix}{Appendix \ref{#1}}
\newrefformat{claim}{Claim \ref{#1}}
\newrefformat{conclusion}{Conclusion \ref{#1}}
\newrefformat{convention}{Convention \ref{#1}}
\newrefformat{conjecture}{Conjecture \ref{#1}}
\newrefformat{C}{Corollary \ref{#1}}
\newrefformat{equ}{(\ref{#1})}
\newrefformat{e}{Example \ref{#1}}
\newrefformat{exe}{Exercise \ref{#1}}
\newrefformat{d}{Definition \ref{#1}}
\newrefformat{done}{Done \ref{#1}}
\newrefformat{f}{Fact \ref{#1}}
\newrefformat{fig}{Figure \ref{#1}}
\newrefformat{h}{Hypothesis \ref{#1}}
\newrefformat{idea}{Idea \ref{#1}}
\newrefformat{i}{Item \ref{#1}}
\newrefformat{l}{Lemma \ref{#1}}
\newrefformat{note}{Note \ref{#1}}
\newrefformat{n}{Notation \ref{#1}}
\newrefformat{o}{Observation \ref{#1}}
\newrefformat{problem}{Problem \ref{#1}}
\newrefformat{p}{Proposition \ref{#1}}
\newrefformat{pseudocode}{Pseudocode \ref{#1}}
\newrefformat{Q}{Q. \ref{#1}}
\newrefformat{r}{Remark \ref{#1}}
\newrefformat{table}{Table \ref{#1}}
\newrefformat{t}{Theorem \ref{#1}}
\newrefformat{Todo}{Todo \ref{#1}}
\newrefformat{w}{Warning \ref{#1}}

\title{Neural logic programs and neural nets}
\author{
	Christian Anti\'c
}
\address{
	christian.antic@icloud.com\\
	Vienna University of Technology\\
	Vienna, Austria
}

\begin{document}
\begin{abstract}
	Neural-symbolic integration aims to combine the connectionist subsymbolic with the logical symbolic approach to artificial intelligence. In this paper, we first define the answer set semantics of (boolean) neural nets and then introduce from first principles a class of neural logic programs and show that nets and programs are equivalent.
\end{abstract}

\maketitle

\section{Introduction}

Artificial neural nets are inspired by the human brain \cite{McCulloch43,Turing48} with numerous applications in artificial intelligence research such as pattern recognition \cite<cf.>{Bishop06}, deep inductive learning through backpropagation \shortcite<cf.>{Goodfellow16}, and game playing with AlphaGo beating the human champion in the game of Go \shortcite{Silver17}, just to name a few. Neural nets are at the core of what is known as the \textit{connectionist}\footnote{\url{https://plato.stanford.edu/entries/connectionism/}} or \textit{subsymbolic} approach to AI. The mathematical subjects behind neural nets are analysis, probability theory, and statistics. 

Logic programming, on the other hand, represents the \textit{symbolic} and \textit{logical} approach to (Good Old-Fashioned) AI \cite{Apt90,Lloyd87}.\footnote{\url{https://en.wikipedia.org/wiki/Symbolic_artificial_intelligence}} It has its roots in mathematical logic and automated reasoning with discrete mathematics, particularly logic, algebra, and combinatorics as its main mathematical tools.

Both worlds, the subsymbolic and the symbolic, have their strengths and weaknesses. Logical formalisms can be interpreted by humans and have a clear formal semantics which is missing for neural nets. Connectionist systems, on the other hand, have a remarkable noise-tolerance and learning capability which is missing for logical formalisms (a notable exception is inductive logic programming \shortcite{Muggleton91}).

Neural-symbolic integration therefore aims at integrating both the symbolic and subsymbolic worlds \cite{Valiant08} \shortcite<cf.>{Garcez02,Garcez09,Garcez15,DeRaedt20}. Compared to the field's short existence, its successes are remarkable and can be found in various fields such as bioinformatics, control engineering, software verification and adaptation, visual intelligence, ontology learning, and computer games \shortcite{Borges11,dePenning11,Hitzler05}. Moreover, it is related to statistical relational learning and probabilistic logic learning \shortcite{DeRaedt08,deRaedt08a,DeRaedt20,Getoor07}. 

The {\bf main contributions} of this paper can be summarized as follows:
\begin{enumerate}
	\item 
		First, in \prettyref{§:Neural_nets} we define an answer set semantics \cite{Gelfond91} \shortcite<cf.>{Baral03,Lifschitz19,Brewka11a,Eiter09} for (boolean) neural nets by defining an immediate consequence and a Fitting operator \cite{vanEmden76,Fitting02} of a net and by applying Approximation Fixed Point Theory (AFT) \shortcite{Denecker00,Denecker04,Denecker12}, which is a non-monotonic generalization of the well-known Knaster-Tarski Theory \cite{Tarski55} of monotonic lattice operators with numerous applications in answer set programming \shortcite<e.g.>{Pelov04,Antic13,Antic20}. To the best of our knowledge, this paper is the first to recognize that such a semantics can be given to (boolean) neural nets (under the assumption that we can remember a neuron's active state). 
	\item 
		Second, in \prettyref{§:NLP} we introduce from first principles neural logic programs as programs assembled from neurons. We define the least model semantics of positive programs, and the answer set semantics of arbitrary programs again by applying AFT; moreover, we define the FLP-answer set semantics in the ordinary sense in terms of the Faber-Leone-Pfeifer reduct \shortcite{Faber04,Faber11}.
	\item 
		Finally, in \prettyref{§:Equivalences} we prove that neural nets and neural logic programs are equivalent.
		% thus bridging the gap between the symbolic and subsymbolic worlds in a mathematically satisfactory way.
\end{enumerate}

In a sense, our approach is dual to the well-known \textit{core method} \shortcite{Hoelldobler94,Hoelldobler99} where one starts with a propositional logic program and then constructs a neural net that simulates that program. The core method is the basis for the simulation-based work on neural-symbolic integration as presented in \citeA{Garcez09}. In a broader sense, this paper is a further step towards neural-symbolic integration.

\section{Neural nets}\label{§:Neural_nets}

In this section, we first recall (boolean) neural nets as presented for example in \citeA[§3]{Garcez02} (with the important difference that we assume here that neurons remain activated) where we simplify the functionality of a neuron to a boolean output computed from the weighted sum of its inputs and a given threshold. We then define the least model semantics of positive nets (\prettyref{§:LM_NN}) and the answer set semantics of arbitrary nets (\prettyref{§:ASP_NN}), which appears to be original.

In what follows, we denote the reals by $\mathbb R$, the positive reals by $\mathbb R^+$, $\mathbb R^{-\infty}:=\mathbb R\cup\{-\infty\}$, and the booleans by $\mathbb B=\{0,1\}$. We denote the cardinality of a set $X$ by $|X|$. We define the empty sum $\sum_{i\in\emptyset}a_i:=-\infty$.\footnote{This convention will be used in \prettyref{equ:T_N(0)}.} A \textit{\textbf{prefixed point}} of a function $f:L\to L$ on a lattice $L=(L,\leq)$ is any element $a\in L$ satisfying $f(a)\leq a$, and we call $a$ a \textit{\textbf{fixed point}} of $f$ iff $f(a)=a$.

\subsection{Neurons and nets}\label{§:Neurons_and_sets}

Let $A$ be a set of \textit{\textbf{neurons}} where each neuron $a\in A$ is determined by its \textit{\textbf{threshold}} $\theta(a)\in\mathbb R^{-\infty}$. 

A (\textit{\textbf{neural}}) \textit{\textbf{net}} over $A$ is a finite weighted directed graph $N$ with neurons from $A$ as vertices and edges $a\xrightarrow{w_{ab}}b$, with $w_{ab}\in\mathbb R$. In case $w_{ab}\neq 0$ we say that $a$ and $b$ are \textit{\textbf{connected}} and in case $w_{ab}=0$ we say that $a$ and $b$ are \textit{\textbf{disconnected}}. A net is \textit{\textbf{positive}} iff it contains only positive weights. We write $a\in N$ in case the neuron $a\in A$ appears in $N$. We define the \textit{\textbf{body}} of a neuron $a\in N$ by
\begin{align*} 
	b_N(a):=\{b\in N\mid w_{ba}\neq 0\}.
\end{align*} A \textit{\textbf{fact}} is a neuron with empty body and no weights, and given a fact $a$ we always assume $\theta(a)=-\infty$, and we assume $\theta(a)\neq -\infty$ in case $a$ is not a fact.\footnote{This will allow us to initiate the computation process of a net (see \prettyref{equ:T_N(0)}).} We denote the facts in $N$ by $facts(N)$. The facts will represent the input signals. 

An \textit{\textbf{ordinary net}} is a net $N$ so that $\theta(a)=|b_N(a)|$, for all $a\in N$, and $w_{ba}=1$ for all $b\in b_N(a)$. Intuitively, in an ordinary net $N$ a neuron $a$ ``fires'' iff each of its body neurons in $b_N(a)$ ``fires.''

\subsection{Least model semantics of positive nets}\label{§:LM_NN}

An \textit{\textbf{interpretation}} is any subset of $A$ and we denote the set of all interpretations over $A$ by $\mathbb I_A$. We can interpret each interpretation $I$ as a function $I:A\to\mathbb B$ so that $I(a)=1$ iff $a\in I$.

In the literature \cite<e.g.>[§3]{Garcez09}, the functionality of a neural net is given with respect to a time point $t$ and the activation of a neuron $a$ at $t$ usually means that $a$ is inactive at $t+1$ unless there is a recurrenct connection from $a$ to itself. In this paper, we take a different approach as we assume that once a neuron $a$ is activated it remains active or, in another interpretation, it is remembered that $a$ was active. This allows the net to reach stable configurations which we identify with stable (or answer set) models. This will allow us in \prettyref{§:Equivalences} to show that nets and programs are equivalent.

\begin{definition} Define the (\textit{\textbf{immediate consequence}}) \textit{\textbf{operator}} of $N$, for every interpretation $I$, by
\begin{align}\label{equ:T_N} 
	T_N(I):=\left\{a\in N \;\middle|\; \sum_{b\in b_N(a)}w_{ba}I(b)\geq\theta(a)\right\}.
\end{align} The operator $T_N$ of a \textit{\textbf{positive}} net $N$ ($w_{ba}\geq 0$) is monotone in the sense that
\begin{align*} 
	I\subseteq J \quad\text{implies}\quad T_N(I)\subseteq T_N(J).
\end{align*} We call an interpretation $I$ a \textit{\textbf{model}} of $N$ iff $I$ is a prefixed point of $T_N$, and we call $I$ a \textit{\textbf{supported model}} of $N$ iff $I$ is a fixed point of $T_N$.
\end{definition}

Since the set of all interpretations over $A$ is a complete lattice with respect to union and intersection, it follows by the well-known Knaster-Tarksi Theory \cite{Tarski55} that for a positive net $N$, the operator $T_N$ has a \textit{\textbf{least fixed point}} which can be obtained via a bottom-up iteration of the form
\begin{align*} 
	T_N^0&:=\emptyset\\
	T_N^{n+1}&:=T_N(T_N^n)\\
	T_N^\infty&:=\bigcup_{n\geq 0}T_N^n.
\end{align*} We call $T_N^\infty$ the \textit{\textbf{least model}} of $N$. Notice that this bottom-up computation can only be initiated since we have assumed $\theta(a)=-\infty$ iff $a$ is a fact in $N$, which implies
\begin{align}\label{equ:T_N(0)} 
	T_N(\emptyset)=\left\{a\in N \;\middle|\; \sum_{b\in\emptyset}b=-\infty\geq\theta(a) \right\}=facts(N).
\end{align} This means that a positive net with no facts (i.e. no input signals) has always the empty least model.

The definition of an immediate consequence operator of a neural net and the associated least model of a positive net appears to be new.

\subsection{Answer set semantics of arbitrary nets}\label{§:ASP_NN}

The immediate consequence operator of an arbitrary neural net possible containing negative weights may be non-monotonic which means that its least fixed point may not exist. Approximation Fixed Point Theory (AFT) \shortcite{Denecker00,Denecker04,Denecker12,Fitting02} has been designed exactly for dealing with non-monotonic operators and it can be seen as a generalization of the Knaster-Tarski Theory from monotonic to non-monotonic lattice operators. 

In this section, we use AFT to define the answer set semantics of neural nets, which appears to be original, by following the standard procedure for defining answer sets in terms of the 3-valued Fitting operator (\prettyref{d:AS}).

\begin{definition} A pair of interpretations $(I,J)$ is a \textit{\textbf{3-interpretation}} iff $I\subseteq J$. This can be interpreted as follows:
\begin{itemize}
	\item $a\in I$ means that $a$ is true,
	\item $a\in J-I$ means that $a$ is undefined,
	\item $a\not\in J$ means that $a$ is false.
\end{itemize} 
\end{definition}

% We define, for every neuron $a\in A$,
% \begin{align*} 
% 	(I,J)(a):=
% 		\begin{cases}
% 			1 & a\in I\\
% 			0.5 & a\in J-I\\
% 			0 & a\not\in J.
% 		\end{cases}
% \end{align*} That is, we assign the value $0.5$ to ``undefined'' neurons. 

\begin{definition} Define the \textit{\textbf{precision ordering}} between 3-interpretations by
\begin{align*} 
	(I,J)\subseteq_p (I',J') \quad\text{iff}\quad I\subseteq I'\subseteq J'\subseteq J.
\end{align*}
\end{definition}

%We have
% \begin{align*} 
% 	(I,J)\subseteq_p (I',J') \quad\text{implies}\quad (I',J')(a)\leq (I,J)(a),\quad\text{for all neurons $a\in A$}.
% \end{align*}

\begin{definition} Define the \textit{\textbf{Fitting operator}} \cite<cf.>{Fitting02} of a net $N$ by
\begin{align*} 
	\Phi_N(I,J):= \left\{a\in N \;\middle|\; \sum_{b\in b_N(a)}w_{ba}K(b)\geq\theta(a),\text{ for all $I\subseteq K\subseteq J$}\right\}.
\end{align*}
\end{definition}

Notice that we have
\begin{align*} 
	\Phi_N(I,I)=T_N(I).
\end{align*} The Fitting operator is monotone with respect to the precision ordering, that is,
\begin{align*} 
	(I,J)\subseteq_p (I',J') \quad\text{implies}\quad \Phi_N(I,J)\subseteq\Phi_N(I',J').
\end{align*} This implies that the operator $\Phi_N(\bullet,I)$ is monotone on the complete lattice of interpretations $\emptyset\subseteq J\subseteq I$ and thus has a least fixed point denoted by $\mathrm{lfp}(\Phi_P(\bullet,I))$. We therefore can define the operator
\begin{align*} 
	\Phi_N^\dagger(I):=\mathrm{lfp}(\Phi_N(\bullet,I)).
\end{align*} 

\begin{definition}\label{d:AS} We call $I$ an \textit{\textbf{answer set}} of $N$ iff $I=\Phi_N^\dagger(I)$.
\end{definition}

The definition of the Fitting operator of a neural net and the associated answer set semantics appears to be new.

\subsection{Acyclic nets}

A net is \textit{\textbf{acyclic}} iff it contains no cycle of non-zero weighted edges. Notice that the neurons in an acyclic net $N$ can be partitioned into \textit{\textbf{layers}} where each neuron only has incoming edges from neurons of lower level. An \textit{\textbf{$n$-layer feed-forward net}} (or \textit{\textbf{$n$-net}}) is an acyclic net $N=N_1\cup\ldots\cup N_n$ (disjoint union) such that $N_i$ contains the neurons of level $i$, $1\leq i\leq n$; we call $N_1$ the \textit{\textbf{input layer}} and $N_n$ the \textit{\textbf{output layer}}. Recall that we assume $\theta(a)=0$ for all input neurons $a\in N_1$ since $b_N(a)$ is empty means that each $a\in N_1$ is a fact (see \prettyref{§:Neurons_and_sets}).

Every $n$-net $N=N_1\cup\ldots\cup N_n$ computes a (boolean) function $f_N:\mathbb I_{N_1}\to\mathbb I_{N_n}$ (notice that $I$ may contain only neurons from the input layer $N_0$ and $f_N(I)$ contains only neurons from the output layer $N_n$) by
\begin{align*} 
	f_N=T_{N_n}\circ\ldots\circ T_{N_1}
\end{align*} so that for each interpretation $I\in\mathbb I_{N_1}$,
\begin{align*} 
	f_N(I)=T_{N_n}(\ldots T_{N_2}(T_{N_1}(I)).
\end{align*}

\section{Neural logic programs}\label{§:NLP}

In this section, we introduce neural logic programs as programs assembled from neurons.

\subsection{Syntax}

Let $A$ be a finite set of neurons. A (\textit{\textbf{neural logic}}) \textit{\textbf{program}} over $A$ is a finite set of (\textit{\textbf{neural}}) \textit{\textbf{rules}} of the form
\begin{align}\label{equ:r} 
	a_0\xleftarrow{\mathbf w}a_1,\ldots,a_k,\quad k\geq 0,
\end{align} where $a_0,\ldots,a_k\in A$ are neurons and $\mathbf w=(w_{a_1a_0},\ldots,w_{a_ka_0})\in\mathbb R^k$, so that $w_{a_ia_0}\neq 0$ for all $1\leq i\leq k$, are weights. A rule of the form \prettyref{equ:r} is \textit{\textbf{positive}} iff all weights $w_1,\ldots,w_k\geq 0$ are positive and a program is \textit{\textbf{positive}} iff it consists only of positive rules. It will be convenient to define, for a rule $r$ of the form \prettyref{equ:r}, the \textit{\textbf{head}} of $r$ by $h(r):=a_0$ and the \textit{\textbf{body}} of $r$ by $b(r):=\{a_1,\ldots,a_k\}$. A program is \textit{\textbf{minimalist}} iff it contains at most one rule for each rule head.

We define the \textit{\textbf{dependency graph}} of $P$ by $dep(P):=(A_P,E_P)$, where $A_P$ are the neurons occurring in $P$, and there is an edge $a\xrightarrow{w_{ba}}b$ in $E_P$ iff there is a rule
\begin{align*} 
	a\xleftarrow{(\ldots,w_{ba},\ldots)}b_1,\ldots,b_{i-1},b,b_{i+1},b_k\in P,\quad k\geq 1.
\end{align*} Notice that the dependency graph of a program is a net! A program is \textit{\textbf{acyclic}} iff its dependency graph is acyclic. Similar to acyclic nets, the neurons $A_P$ occurring in an acyclic program $P$ can be partitioned into \textit{\textbf{layers}} $A_P=A^1_P\cup\ldots\cup A^n_P$ (disjoint union) such that for each rule $r\in P$, if $h(r)\in A^i_P$ then $b\in A^{i-k}_P$, $1\leq k\leq i-1$, for every $b\in b(r)$. An \textit{\textbf{$n$-program}} is an acyclic program which has a partitioning into $n$ layers.

An \textit{\textbf{ordinary rule}} is a rule of the form \prettyref{equ:r} with $\mathbf w=(1,\ldots,1)\in\mathbb R^k$ and $\theta(a_0)=k$ written simply as
\begin{align}\label{equ:r2} 
	a_0\leftarrow a_1,\ldots,a_k.
\end{align} An \textit{\textbf{ordinary program}} consists only of ordinary rules.

\subsection{Answer set semantics}

We now define the semantics of a neural logic program. As for neural nets, an \textit{\textbf{interpretation}} of a program is any subset of $A$. 

\begin{definition} The semantics of ordinary programs is defined as for ordinary propositional Horn logic programs inductively as follows:
\begin{itemize}
	\item for a neuron $a\in A$, $I\models a$ iff $a\in I$;
	\item for a set of neurons $B\subseteq A$, $I\models B$ iff $B\subseteq I$;
	\item for an ordinary rule $r$ of the form \prettyref{equ:r2}, $I\models r$ iff $I\models b(r)$ implies $I\models h(r)$;
	\item for an ordinary program $P$, $I\models P$ iff $I\models r$ for each $r\in P$.
\end{itemize}
\end{definition}

%There is a characteristic difference to ordinary programs: we \textit{cannot} define the entailment relation on the level of neurons since neurons are always evaluated via their connections to other neurons, that is, $I\models a$ makes no sense. 

\begin{definition} We define the semantics of neural logic programs inductively as follows:
\begin{itemize}
	\item For a neuron $a\in A$, $I\models a$ iff $a\in I$.
	\item For a rule $r$ of the form \prettyref{equ:r}, we define
	\begin{align*} 
		I\models r \quad\text{iff}\quad \sum_{b\in b(r)}w_{ba}I(b)\geq\theta(h(r)) \text{ implies } I\models h(r).
	\end{align*}
	\item For a neural logic program $P$, $I\models P$ iff $I\models r$ for every $r\in P$, in which case we call $I$ a \textit{\textbf{model}} of $P$.
\end{itemize}
\end{definition}

\begin{definition} Define the (\textit{\textbf{immediate consequence}}) \textit{\textbf{operator}} \shortcite{vanEmden76} of $P$, for every interpretation $I$, by
\begin{align*} 
	T_P(I):=\left\{h(r) \;\middle|\; r\in P, \sum_{b\in b(r)}w_{ba}I(b)\geq\theta(h(r))\right\}.
\end{align*}
\end{definition}

Notice the similarity to the immediate consequence operator of a net in \prettyref{equ:T_N} which will be essential in \prettyref{§:Equivalences}. If $P$ is ordinary, then we get the ordinary immediate consequence operator of \shortciteA{vanEmden76}. 

\begin{example} Consider the program
\begin{align*} 
	P:= \left\{
	\begin{array}{l}
		a\\
		b\xleftarrow{w}a		
	\end{array}
	\right\}.
\end{align*} We have
\begin{align*} 
	a\in T_P(\{a\}) \quad\text{iff}\quad w\{a\}(a)\geq\theta(b) \quad\text{iff}\quad w\geq\theta(b).
\end{align*} So for example if $w=0$ and $\theta(b)>0$, the least model of $P$ is $\{a\}$ which differs from the least model $\{a,b\}$ of the ordinary program which we obtain from $P$ by putting $w=1$ and $\theta(b)=1$.
\end{example}

\begin{fact}\label{f:prefixed} An interpretation $I$ is a model of $P$ iff $I$ is a prefixed point of $T_P$.
\end{fact}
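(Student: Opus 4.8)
The plan is to unfold both sides of the claimed equivalence directly from the definitions, since \prettyref{f:prefixed} is essentially a bookkeeping statement matching the definition of ``model of $P$'' against the definition of a prefixed point of $T_P$. First I would spell out what $I$ being a prefixed point of $T_P$ means: $T_P(I)\subseteq I$, i.e.\ for every rule $r\in P$ with $\sum_{b\in b(r)}w_{ba}I(b)\geq\theta(h(r))$ we have $h(r)\in I$. Then I would spell out what $I\models P$ means: for every $r\in P$, $I\models r$, which by definition says that $\sum_{b\in b(r)}w_{ba}I(b)\geq\theta(h(r))$ implies $I\models h(r)$, i.e.\ $h(r)\in I$.

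The two conditions are visibly the same statement quantified over all $r\in P$, so the proof is just to observe this. For the forward direction, suppose $I\models P$ and let $a\in T_P(I)$; then $a=h(r)$ for some $r\in P$ with $\sum_{b\in b(r)}w_{ba}I(b)\geq\theta(h(r))$, so by $I\models r$ we get $a=h(r)\in I$; hence $T_P(I)\subseteq I$. For the converse, suppose $T_P(I)\subseteq I$ and let $r\in P$; if $\sum_{b\in b(r)}w_{ba}I(b)\geq\theta(h(r))$ then $h(r)\in T_P(I)\subseteq I$, so $I\models r$; since $r$ was arbitrary, $I\models P$.

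There is essentially no obstacle here; the only point requiring any care is the quantifier structure, namely that $h(r)\in T_P(I)$ witnessed by \emph{one} rule $r$ suffices to force $h(r)\in I$, while $I\models P$ quantifies over \emph{all} rules, and these match up because membership in $T_P(I)$ is an existential over rules with the given head firing. I would also note in passing that, since $b(r)=b_N(a)$ and $\theta$ coincide when $P$ is the program whose rules are read off from a net $N$, this Fact is the exact program-side analogue of the ``$I$ is a model of $N$ iff $I$ is a prefixed point of $T_N$'' remark made after the definition of $T_N$, and this parallel is what will be exploited in \prettyref{§:Equivalences}.
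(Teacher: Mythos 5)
Your proof is correct and is exactly the argument the paper intends: the Fact is stated without an explicit proof precisely because it follows by unfolding the definitions of $T_P$, of $I\models r$, and of a prefixed point, which is what you do in both directions. Your closing remark should only be adjusted slightly: for nets the paper \emph{defines} a model of $N$ as a prefixed point of $T_N$, so on the net side there is nothing to prove, whereas on the program side the Fact records that the rule-wise satisfaction relation $\models$ happens to coincide with that definition.
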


An interpretation $I$ is a \textit{\textbf{supported model}} of $P$ iff $I$ is a fixed point of $T_P$. A program is \textit{\textbf{supported model equivalent}} to a net iff they have the same supported models. As for nets, the operator of a positive program $P$ is monotone and thus has a least fixed point, which by \prettyref{f:prefixed} is a model of $P$, called the \textit{\textbf{least model}} of $P$. A program $P$ is \textit{\textbf{subsumption equivalent}} \cite{Maher88} to a net $N$ iff $T_P=T_N$. Moreover, a positive program is \textit{\textbf{least model equivalent}} to a net iff their least models coincide. As for nets, the operator of a non-positive program may be non-monotonic which means that that its least fixed point may not exists. We can define an answer set semantics of arbitrary programs in the same way as for arbitrary nets by defining the operators $\Phi_P$ and $\Phi_P^\dagger$ and by saying that $I$ is an \textit{\textbf{answer set}} of $P$ iff $I=\Phi_P^\dagger(I)$. Notice that this construction is the standard procedure for defining an answer set semantics of a program in terms of the Fitting operator \shortcite<cf.>{Fitting02,Denecker12}. A program is \textit{\textbf{equivalent}} to a net iff their answer sets coincide.

In contrast to nets, we can define the answer set semantics of a neural logic program in a direct way using the (\textit{\textbf{Faber-Leone-Pfeifer}}) \textit{\textbf{reduct}} \shortcite{Faber04,Faber11} defined, for every program $P$ and interpretation $I$, by
\begin{align*} 
	P^I:=\{r\in P\mid I\models b(r)\}.
\end{align*} We now can say that $I$ is an \textit{\textbf{FLP-answer set}} of $P$ iff $I$ is a $\subseteq$-minimal model of $P^I$. Notice that we cannot define the reduct of a neural net in a reasonable way as we have no notion of ``rule'' for nets, which means that there is no notion of FLP-answer set for nets. 

% We say that a program $P$ is \textit{\textbf{FLP-equivalent}} to a net $N$ iff the FLP-answer sets of $P$ and the answer sets of $N$ coincide.

\section{Equivalences}\label{§:Equivalences}

We are now ready to prove the main results of the paper.

\begin{theorem}\label{t:subsumption} Every neural net is subsumption equivalent to a minimalist neural program.
\end{theorem}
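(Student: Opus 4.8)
The plan is to construct, directly from a given net $N$, a minimalist neural logic program $P_N$ by reading off exactly one rule per neuron. For each neuron $a\in N$ I introduce a single rule $r_a$ whose head is $a$, whose body is $b(r_a):=b_N(a)=\{b_1,\ldots,b_k\}$ (listed in some fixed order), whose weight vector is $\mathbf w:=(w_{b_1a},\ldots,w_{b_ka})$, i.e. the weights of the incoming edges of $a$ in $N$, and whose head threshold is inherited as $\theta(a)$. When $b_N(a)=\emptyset$ the neuron $a$ is a fact of $N$, so $\theta(a)=-\infty$, and $r_a$ is then the fact $a$, i.e. a rule of the form \prettyref{equ:r} with $k=0$. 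I set $P_N:=\{r_a\mid a\in N\}$. This is a syntactically legal neural logic program: each weight in the body of $r_a$ equals $w_{ba}$ for some $b\in b_N(a)$ and is therefore non-zero, as \prettyref{equ:r} requires; and the fact/threshold convention of $N$---namely $\theta(a)=-\infty$ exactly when $b_N(a)=\emptyset$---is precisely the one demanded of programs. Since $P_N$ contains exactly one rule for each head, it is minimalist.

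It remains to verify $T_{P_N}=T_N$, which is a direct unwinding of the two operator definitions. The map $r\mapsto h(r)$ is a bijection from $P_N$ onto the neurons of $N$, and for $r=r_a$ we have $h(r_a)=a$, $b(r_a)=b_N(a)$, $\theta(h(r_a))=\theta(a)$, and the weights of $r_a$ agree with those of $N$; hence, for every interpretation $I$,
\begin{align*}
	T_{P_N}(I)&=\left\{h(r)\;\middle|\;r\in P_N,\ \sum_{b\in b(r)}w_{ba}I(b)\geq\theta(h(r))\right\}\\
	&=\left\{a\in N\;\middle|\;\sum_{b\in b_N(a)}w_{ba}I(b)\geq\theta(a)\right\}=T_N(I).
\end{align*}
The defining inequalities are term-for-term identical, so the two operators agree on every interpretation, including on $I=\emptyset$, where both sides reduce to $facts(N)$ by the empty-sum convention exactly as in \prettyref{equ:T_N(0)}. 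Thus $T_{P_N}=T_N$, i.e. $P_N$ is subsumption equivalent to $N$, which proves the theorem.

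I do not anticipate a genuine obstacle: the statement is essentially the remark that the neural rule format of \prettyref{equ:r} is expressive enough to encode a single neuron of a net verbatim, and the immediate consequence operators of nets and programs are then given by formally identical expressions. The only points that need a little care are the bookkeeping around facts---empty bodies, the threshold value $-\infty$, and the empty-sum convention $\sum_{i\in\emptyset}a_i=-\infty$---which one must handle so that the constructed $P_N$ really is a program and so that $T_{P_N}$ and $T_N$ coincide on all interpretations, the initial interpretation $\emptyset$ included.
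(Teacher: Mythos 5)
Your proposal is correct and follows essentially the same route as the paper: you build the same minimalist program $P_N$ with one rule per neuron carrying over the body $b_N(a)$, the weights $w_{ba}$, and the threshold $\theta(a)$, and then verify $T_{P_N}=T_N$ by direct unwinding of the two operator definitions. The additional bookkeeping you supply about facts, the $-\infty$ threshold, and the empty-sum convention is a welcome (if implicit in the paper) clarification, but it does not change the argument.
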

\begin{proof} Given a net $N$, define the minimalist program 
\begin{align*} 
	P_N:=\left\{a\xleftarrow{\mathbf w}b_N(a) \;\middle|\; a\in N,\mathbf w=(w_{ba}\mid b\in b_N(a))\right\}.
\end{align*} Notice that
\begin{align*} 
	N=dep(P_N).
\end{align*} For any interpretation $I$, we have
\begin{align*} 
	T_N(I)
		&=\left\{a\in N \;\middle|\; \sum_{b\in b_N(a)}w_{ba}I(b)\geq\theta(a)\right\}\\
		&=\left\{h(r) \;\middle|\; r\in P_N, \sum_{b\in b(r)}w_{ba}I(b)\geq\theta(h(r))\right\}\\
		&=T_{P_N}(I).
\end{align*}
\end{proof}

\begin{remark} Notice that the neural program $P_N$ constructed in the proof of \prettyref{t:subsumption} is \textit{\textbf{minimalist}} in the sense that it contains at most one rule for each head. Moreover, if $N$ is acyclic then $P_N$ is acyclic as well.
\end{remark}

\begin{corollary} Every neural net is supported model equivalent to a minimalist neural program.
\end{corollary}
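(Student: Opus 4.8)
The plan is to derive this immediately from \prettyref{t:subsumption}. That theorem, applied to a given net $N$, produces a minimalist neural program $P_N$ with $T_N = T_{P_N}$. Since a supported model of a net (resp.\ program) is by definition a fixed point of its immediate consequence operator $T_N$ (resp.\ $T_{P_N}$), equality of the two operators forces equality of their fixed-point sets. Hence $N$ and $P_N$ have exactly the same supported models, which is what it means to be supported model equivalent.

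Concretely, I would write: let $N$ be a neural net and let $P_N$ be the minimalist program constructed in the proof of \prettyref{t:subsumption}, so that $T_N = T_{P_N}$. Then for every interpretation $I$ we have $I = T_N(I)$ iff $I = T_{P_N}(I)$; that is, $I$ is a supported model of $N$ iff $I$ is a supported model of $P_N$. Therefore $N$ is supported model equivalent to the minimalist program $P_N$.

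There is essentially no obstacle here: the work is entirely front-loaded into \prettyref{t:subsumption}, and the corollary is a one-line consequence of the fact that "supported model" is defined purely in terms of the operator (a fixed point of $T_N$ / $T_{P_N}$). The only thing worth remarking is that subsumption equivalence ($T_P = T_N$) is strictly stronger than supported model equivalence, so this really is just an instance of the general principle that operator equality implies agreement of all operator-defined notions (supported models, and in fact least models in the positive case). No nontrivial calculation is needed.
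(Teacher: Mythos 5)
Your proof is correct and matches the paper's intent exactly: the paper states this corollary without proof as an immediate consequence of \prettyref{t:subsumption}, relying on precisely the observation you make, namely that $T_N = T_{P_N}$ forces the fixed points (and hence the supported models) of the two operators to coincide. Nothing further is needed.
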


\begin{corollary} Every positive neural net is least model equivalent to a positive minimalist neural program.
\end{corollary}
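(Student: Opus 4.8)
The plan is to reuse verbatim the construction from the proof of \prettyref{t:subsumption}. Given a positive net $N$, form the minimalist program $P_N$ exactly as there. The one point that needs checking is that $P_N$ is again \emph{positive}: every weight occurring in $P_N$ is of the form $w_{ba}$ with $b\in b_N(a)$, hence nonzero by the definition of the body, and hence strictly positive because $N$ is positive. So $P_N$ is a positive minimalist neural program, and moreover, just as noted in the remark following \prettyref{t:subsumption}, it contains at most one rule per head.

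Next I would invoke \prettyref{t:subsumption} to get $T_N=T_{P_N}$ as operators on $\mathbb I_A$. Since $N$ is positive, $T_N$ is monotone, and since $P_N$ is positive, $T_{P_N}$ is monotone; in either case the least model is obtained by the bottom-up iteration $T^0:=\emptyset$, $T^{n+1}:=T(T^n)$, $T^\infty:=\bigcup_{n\geq 0}T^n$ guaranteed by the Knaster--Tarski theory recalled in \prettyref{§:LM_NN}. Because the two operators are literally the same function, their iteration sequences coincide term by term, so $T_N^\infty=T_{P_N}^\infty$; in particular the initial step agrees, $T_N(\emptyset)=facts(N)=T_{P_N}(\emptyset)$, and so do all subsequent ones. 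Hence the least model of $N$ equals the least model of $P_N$, i.e. $N$ is least model equivalent to the positive minimalist program $P_N$.

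There is essentially no obstacle: the statement is a direct corollary of subsumption equivalence. The only thing deserving a moment's attention is the transfer of positivity from $N$ to $P_N$, which relies on the convention that a neuron's body consists exactly of the neurons joined to it by a nonzero weight, so that no spurious zero weights can enter $P_N$ and spoil positivity. Everything else follows from $T_N=T_{P_N}$ together with the fact that, for monotone operators, the least model is determined by the operator alone.
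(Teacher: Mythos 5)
Your proof is correct and follows exactly the route the paper intends: the corollary is stated as an immediate consequence of \prettyref{t:subsumption}, since $T_N=T_{P_N}$ forces the least fixed points to coincide, and your check that positivity transfers to $P_N$ (because bodies exclude zero weights by definition) fills in the only detail the paper leaves implicit.
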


\begin{theorem}\label{t:ordinary} Every ordinary neural net is subsumption equivalent to an ordinary minimalist neural program.
\end{theorem}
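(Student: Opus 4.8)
The plan is to reuse verbatim the construction from the proof of \prettyref{t:subsumption} and then simply check that the ordinariness of the net is inherited by the associated program. Concretely, given an ordinary net $N$, form the minimalist program
\begin{align*}
	P_N:=\left\{a\xleftarrow{\mathbf w}b_N(a) \;\middle|\; a\in N,\ \mathbf w=(w_{ba}\mid b\in b_N(a))\right\}
\end{align*}
exactly as before. \prettyref{t:subsumption} already yields $T_N=T_{P_N}$, hence $N$ and $P_N$ are subsumption equivalent, and the remark following it records that $P_N$ is minimalist; so the only thing left to verify is that every rule of $P_N$ is an \emph{ordinary} rule in the sense of \prettyref{equ:r2}.

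This is a one-line check against the definition of an ordinary net. Fix $a\in N$ and let $r$ be the rule $a\xleftarrow{\mathbf w}b_N(a)$ of $P_N$, with $b(r)=b_N(a)=\{a_1,\ldots,a_k\}$. Since $N$ is ordinary we have $w_{a_ia}=1$ for every $i$, so $\mathbf w=(1,\ldots,1)\in\mathbb R^k$; and again by ordinariness $\theta(h(r))=\theta(a)=|b_N(a)|=k$. These are exactly the two conditions defining an ordinary rule, so $r$ may be written in the form $a\leftarrow a_1,\ldots,a_k$ of \prettyref{equ:r2}. As this holds for every $a\in N$, the program $P_N$ is ordinary, and together with \prettyref{t:subsumption} this proves the claim.

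I do not expect a real obstacle; the statement is essentially \prettyref{t:subsumption} plus the observation that the map $N\mapsto P_N$ carries ordinary nets to ordinary programs. The one point worth spelling out is the treatment of facts, i.e.\ neurons $a$ with $b_N(a)=\emptyset$: the corresponding rule of $P_N$ is the empty-bodied rule $a\leftarrow$, which is harmless to read as the $k=0$ instance of \prettyref{equ:r2}, and the convention $\theta(a)=-\infty$ (making the empty weighted sum satisfy $-\infty\geq\theta(a)$) is exactly what is already accounted for by the equality $T_N=T_{P_N}$ inherited from \prettyref{t:subsumption}. So the write-up is short: state $P_N$, cite \prettyref{t:subsumption} for subsumption equivalence and minimality, and do the two-equation verification that each rule of $P_N$ meets the ordinariness conditions.
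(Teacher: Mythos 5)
Your proposal is correct and takes essentially the same route as the paper: the paper also reuses $P_N$ from \prettyref{t:subsumption} and merely rewrites it as the ordinary program $\widehat P_N:=\{a\leftarrow b_N(a)\mid a\in N\}$, checking $T_{P_N}=T_{\widehat P_N}$ via the unit weights and $\theta(a)=|b_N(a)|$ — which is exactly your observation that each rule of $P_N$ already satisfies the definition of an ordinary rule. Your aside on facts (where $\theta(a)=-\infty$ sits uneasily with $\theta(a)=|b_N(a)|=0$) is a detail the paper glosses over, but it does not change the argument.
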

\begin{proof} The minimalist program $P_N$ as defined in the proof of \prettyref{t:subsumption} is equivalent to the ordinary program
\begin{align*} 
	\widehat P_N:=\{a\leftarrow b_N(a)\mid a\in N\} 
\end{align*} since
\begin{align*} 
	T_{P_N}(I)
		&= \left\{h(r) \;\middle|\; r\in P_N, \sum_{b\in b_N(a)}I(b)=|b_N(a)|\right\}\\
		&=\left\{h(r) \;\middle|\; r\in \widehat P_N:I\models b(r)\right\}\\
		&=T_{\widehat P_N}(I),
\end{align*} where the first identity follows from the assumption that $N$ is ordinary.
\end{proof}

\begin{theorem} Every neural net is equivalent to a minimalist neural program.
\end{theorem}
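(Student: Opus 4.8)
The plan is to reuse the minimalist program $P_N$ constructed in the proof of \prettyref{t:subsumption} and to show that $N$ and $P_N$ have the same answer sets. By definition, the answer sets of a net (resp.\ a program) are exactly the fixed points of $\Phi_N^\dagger$ (resp.\ $\Phi_{P_N}^\dagger$), and $\Phi^\dagger$ is built solely from the Fitting operator $\Phi$, namely $\Phi_N^\dagger(I)=\mathrm{lfp}(\Phi_N(\bullet,I))$. Hence it suffices to prove that $\Phi_N=\Phi_{P_N}$ as operators on 3-interpretations; everything else then follows mechanically.

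The key observation is that the Fitting operator can be expressed entirely through the immediate consequence operator: for every 3-interpretation $(I,J)$,
\begin{align*}
	\Phi_N(I,J)=\bigcap_{I\subseteq K\subseteq J}T_N(K),
\end{align*}
since $a\in\Phi_N(I,J)$ holds iff $\sum_{b\in b_N(a)}w_{ba}K(b)\geq\theta(a)$ for every $K$ with $I\subseteq K\subseteq J$, i.e.\ iff $a\in T_N(K)$ for every such $K$; and the analogous identity holds for $\Phi_{P_N}$. By \prettyref{t:subsumption} we have $T_N=T_{P_N}$, so the two intersections agree and $\Phi_N(I,J)=\Phi_{P_N}(I,J)$ for all $(I,J)$. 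Alternatively one can avoid the intersection identity and simply repeat the chain of equalities from the proof of \prettyref{t:subsumption} with a variable $K$ ranging over $I\subseteq K\subseteq J$ in place of $I$, which yields $\Phi_N=\Phi_{P_N}$ directly.

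From $\Phi_N=\Phi_{P_N}$ we get $\Phi_N(\bullet,I)=\Phi_{P_N}(\bullet,I)$ for every interpretation $I$, so these monotone operators have the same least fixed point, i.e.\ $\Phi_N^\dagger(I)=\Phi_{P_N}^\dagger(I)$ for all $I$. Consequently $I=\Phi_N^\dagger(I)$ iff $I=\Phi_{P_N}^\dagger(I)$, that is, $I$ is an answer set of $N$ iff it is an answer set of $P_N$; and $P_N$ is minimalist by construction (at most one rule per head). There is no genuine obstacle here: the only point that needs care is confirming that the answer set semantics of both nets and programs is determined by the $T$-operator alone, which is immediate once the definitions of $\Phi$ and $\Phi^\dagger$ are unwound as above, since the substantive computation was already carried out in \prettyref{t:subsumption}.
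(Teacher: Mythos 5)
Your proof is correct and takes essentially the same route as the paper: both reduce the claim to the operator identity $\Phi_N=\Phi_{P_N}$, obtained by rerunning the chain of equalities from \prettyref{t:subsumption} with $K$ ranging over $I\subseteq K\subseteq J$, and then let the definition of $\Phi^\dagger$ do the rest. Your intermediate identity $\Phi(I,J)=\bigcap_{I\subseteq K\subseteq J}T(K)$ is a harmless repackaging (note it relies on there being at most one rule per head on the program side, which holds since $P_N$ is minimalist), so no further comment is needed.
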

\begin{proof} Given a net $N$, for the minimalist program $P_N$ as constructed in the proof of \prettyref{t:subsumption}, we have
\begin{align*} 
	\Phi_N(I,J)
		&=\left\{a\in N \;\middle|\; \sum_{b\in b_N(a)}w_{ba}K(b)\geq\theta(a),\text{ for all $I\subseteq K\subseteq J$}\right\}\\
		&=\left\{r\in P_N \;\middle|\; \sum_{b\in b(r)}w_{bh(r)}K(b)\geq\theta(h(r)),\text{ for all $I\subseteq K\subseteq J$}\right\}\\
		&=\Phi_{P_N}(I,J).
\end{align*}
\end{proof}

\begin{corollary} Every $n$-net is equivalent to an $n$-program.
\end{corollary}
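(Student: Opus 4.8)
The plan is to reduce everything to the theorem just proved — that every neural net $N$ is equivalent to the minimalist program $P_N$ satisfying $N = dep(P_N)$ — and then to check that the assignment $N \mapsto P_N$ respects the layered structure, so that an $n$-net is sent to an $n$-program.

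First I would recall that for an $n$-net $N = N_1 \cup \ldots \cup N_n$ the feed-forward assumption says that each neuron has incoming edges only from strictly lower layers, i.e.\ $b_N(a) \subseteq N_1 \cup \ldots \cup N_{i-1}$ whenever $a \in N_i$. Since $P_N$ contains, for each neuron $a \in N$ (including the facts, whose rule has empty body), exactly one rule $a \xleftarrow{\mathbf w} b_N(a)$ with head $a$ and body $b_N(a)$, the neuron set $A_{P_N}$ is precisely that of $N$, and setting $A^i_{P_N} := N_i$ partitions $A_{P_N}$ into $n$ layers. For every rule $r \in P_N$ with $h(r) \in A^i_{P_N}$ each $b \in b(r) = b_N(h(r))$ then lies in some $A^j_{P_N}$ with $j < i$ (vacuously so for input neurons, since their body is empty). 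Hence $P_N$ is an acyclic program with a partitioning into $n$ layers, i.e.\ an $n$-program.

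It then remains only to invoke equivalence in the sense of coinciding answer sets, which is already established: in the proof of the previous theorem we showed $\Phi_N(I,J) = \Phi_{P_N}(I,J)$ for all 3-interpretations $(I,J)$, whence $\Phi_N^\dagger = \Phi_{P_N}^\dagger$ and so $N$ and $P_N$ have exactly the same answer sets. Combining this with the layering observation yields the corollary.

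I do not expect a genuine obstacle here; the argument is a matter of unwinding the definitions of $n$-net and $n$-program and observing that $dep(P_N) = N$ already matches the two layering conditions termwise. The only point that warrants a line of care is confirming that $A_{P_N}$ coincides with the neuron set of $N$ — i.e.\ that every neuron of $N$, the facts included, occurs as the head of a (unique) rule of $P_N$ — so that the layers $N_1, \ldots, N_n$ genuinely partition $A_{P_N}$ rather than merely cover it.
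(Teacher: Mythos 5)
Your argument is correct and follows exactly the paper's own route: the paper's proof simply asserts that the program $P_N$ from the proof of Theorem~\ref{t:subsumption} is an $n$-program equivalent to the given $n$-net, and you supply the same construction together with the (omitted but straightforward) verification that the layering $A^i_{P_N}:=N_i$ makes $P_N$ an $n$-program. Your added care about facts appearing as heads of rules with empty body, so that the layers partition rather than merely cover $A_{P_N}$, is a sound elaboration of what the paper leaves implicit.
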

\begin{proof} Given an $n$-net $N$, the program $P_N$ in the proof of \prettyref{t:subsumption} is an $n$-program equivalent to $N$.
\end{proof}

\section{Future work}

In this paper, neurons are boolean in the sense that their outputs are boolean values from $\{0,1\}$, which is essential for the translation into neural logic programs. If we allow neurons to have values in the reals (or any other semiring), then we need to translate them into \textit{weighted} neural logic programs where the semantics is given within the reals (or the semiring) as well \shortcite<cf.>{Droste07,Stueber08,Cohen11}. This is important since learning strategies such as backpropagation work only in the non-boolean setting.

This brings us to the next line of research which is to interpret well-known learning strategies such as backpropagation in the setting of neural logic programs and to analyze the role of the immediate consequence operator in learning.

HEX programs \shortcite{Eiter05} incorporating external atoms into answer set programs can be used to implement neural logic programs where each neuron is implemented by an external atom, given that HEX programs are generalized to allow external atoms to occur in rule heads (which still appears to be missing).

Arguably the most fascinating line of future research is to lift the concepts and results of this paper from propositional to \textit{first-order} neural logic programs. This requires to resolve the fundamental question of what a ``first-order neuron'' is, which is related to the problem of \textit{variable binding} \shortcite<cf.>{Smolensky90,Sun94,Browne99,Feldmann13} at the \textit{core} of neural-symbolic integration \shortcite<cf.>{Bader08}. This line of work will be related to the recently introduced neural stochastic logic programs \shortcite{Manhaeve21,Winters22} which extend probabilistic logic programs with neural predicates.

Finally, it appears fundamental to introduce and study the sequential composition \cite{Antic21-1,Antic21-2} of neural logic programs for program composition and decomposition, since it provides an \textit{algebra of neural logic programs} and thus an \textit{algebra of neural nets} which can be used to define \textit{neural logic program proportions} \cite{Antic23-23} of the form ``$P$ is to $Q$ what $R$ is to $S$'' and thus analogical proportions between neural nets. Even more interesting are ``mixed'' proportions of the form $P:R::M:N$ where $P,R$ are ordinary programs and $M,N$ are neural programs representing neural nets, which ultimately gives us an algebraic connection between pure logic programs and neural nets in the form of proportional functors $F$ satisfying $P:R::F(P):F(R)$ \cite{Antic22-4}.
\todo[inline]{}

\section{Conclusion}

In this paper, we defined the least model semantics of positive and the answer set semantics of arbitrary (boolean) neural nets. After that we defined from first principles the class of neural logic programs and showed that neural nets and neural programs are equivalent. In a broader sense, this paper is a further step towards neural-symbolic integration.

\bibliographystyle{theapa}
\bibliography{/Users/christianantic/Bibdesk/Bibliography,/Users/christianantic/Bibdesk/Publications_J,/Users/christianantic/Bibdesk/Publications_C,/Users/christianantic/Bibdesk/Preprints,/Users/christianantic/Bibdesk/Submitted,/Users/christianantic/Bibdesk/Notes}
\if\isdraft1
\newpage

\section{Ultimate semantics}

We now define the ultimate semantics of a neural net \cite{Denecker04}. For this, define the \textit{\textbf{ultimate 3-valued immediate consequence operator}} of $N$, for every 3-interpretation $(I,J)$, by
\begin{align*} 
	U_N(I,J):= \left(\bigcap_{I\subseteq K\subseteq J}T_N(K),\bigcup_{I\subseteq K\subseteq J}T_N(K)\right).
\end{align*} In analogy to the Fitting operator, the operator $U_N(\bullet,I)$ is monotone on the complete lattice of interpretations $\emptyset\subseteq J\subseteq I$ and thus has a least fixed point denoted by $\mathrm{lfp}(U_P(\bullet,I))$. We therefore can define the operator
\begin{align*} 
	U_N^\dagger(I):=\mathrm{lfp}(U_P(\bullet,I)).
\end{align*} We call $I$ an \textit{\textbf{ultimate answer set}} of $N$ iff $I=U_N^\dagger(I)$.

Two nets $M,N$ are \textit{\textbf{ultimately equivalent}} iff $U_M=U_N$. Notice that subsumption equivalent nets are ultimately equivalent.

\section{Comparison to answer set programs}

In this section, we compare neural logic programs and ordinary \textit{\textbf{answer set programs}} which are finite sets of \textit{\textbf{rules}} of the form
\begin{align} 
	a_0\leftarrow a_1,\ldots,a_k,not\,a_{k+1},\ldots,not\,a_m,\quad m\geq k\geq 0,
\end{align} where $a_0,\ldots,a_m$ are ordinary atoms. 

The question is whether each neural program is equivalent to an answer set program and vice versa...

\section{}

\begin{theorem}\label{t:FLP} Every answer set is an FLP-answer set.
\end{theorem}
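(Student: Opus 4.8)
The plan is to check the two defining requirements for $I:=\Phi_P^\dagger(I)$ to be an FLP-answer set — that $I$ is a model of the reduct $P^I$, and that it is $\subseteq$-minimal among models of $P^I$ — working directly from $I=\mathrm{lfp}(\Phi_P(\bullet,I))$. Throughout I would read ``$I\models b(r)$'', and hence membership of $r$ in $P^I$, as the firing condition $\sum_{b\in b(r)}w_{bh(r)}I(b)\geq\theta(h(r))$; this is the reading that specializes to $b(r)\subseteq I$ for ordinary rules, and it is the one under which the statement is correct. The one structural device the whole argument turns on is the pair of instantiations $K=I$ and $K=M$ inside $\Phi_P$.

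For ``$I$ is a model of $P^I$'', I would argue as follows. Since $I=\Phi_P^\dagger(I)=\mathrm{lfp}(\Phi_P(\bullet,I))$, the interpretation $I$ is in particular a fixed point of $\Phi_P(\bullet,I)$, so $\Phi_P(I,I)=I$; but $\Phi_P(I,I)=T_P(I)$, whence $T_P(I)=I$ and in particular $T_P(I)\subseteq I$. By \prettyref{f:prefixed} a prefixed point of $T_P$ is a model of $P$, so $I$ is a model of $P$, and since $P^I\subseteq P$ it follows that $I$ is a model of $P^I$.

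For minimality, take any model $M$ of $P^I$ with $M\subseteq I$; I would show that $M$ is a prefixed point of $\Phi_P(\bullet,I)$. Suppose $a\in\Phi_P(M,I)$, witnessed by a rule $r\in P$ with $h(r)=a$ and $\sum_{b\in b(r)}w_{bh(r)}K(b)\geq\theta(h(r))$ for every $K$ with $M\subseteq K\subseteq I$. Taking $K=I$ gives $\sum_{b\in b(r)}w_{bh(r)}I(b)\geq\theta(h(r))$, i.e. $r\in P^I$; taking $K=M$ gives $\sum_{b\in b(r)}w_{bh(r)}M(b)\geq\theta(h(r))$, i.e. the body of $r$ fires under $M$. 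Since $M$ is a model of $P^I$ and $r\in P^I$ has a firing body, $M$ must satisfy the head, so $a=h(r)\in M$. Hence $\Phi_P(M,I)\subseteq M$. Now $I=\mathrm{lfp}(\Phi_P(\bullet,I))$ is, by Knaster-Tarski, also the least prefixed point of $\Phi_P(\bullet,I)$ on the lattice of subsets of $I$, so $I\subseteq M$; combined with $M\subseteq I$ this forces $M=I$, and minimality follows.

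I expect the first requirement and the closing Knaster-Tarski step to be routine; the substantive step is the prefixed-point claim in the minimality part, and within it the double use of $K=I$ (which is exactly what lands the witnessing rule in $P^I$) and $K=M$ (which makes that rule applicable under $M$). The only genuine point of care — and the place where an inattentive definition would break the argument — is pinning down what the FLP-reduct $P^I$ means for neural as opposed to ordinary rules; with the firing-condition reading above everything goes through, and note that only this one inclusion can be expected in general, since an FLP-answer set need not satisfy $I=\Phi_P^\dagger(I)$.
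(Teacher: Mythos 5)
Your proof is correct. Note that the paper itself offers no proof of this theorem --- it appears only as an empty placeholder in draft-only material --- so there is nothing to compare against; your argument is the standard one showing that AFT-style answer sets are FLP-answer sets. The two halves are exactly right: $\Phi_P(I,I)=T_P(I)=I$ gives modelhood of $P^I$ via \prettyref{f:prefixed}, and the instantiations $K=I$ (putting the witnessing rule into $P^I$) and $K=M$ (making its body fire under $M$) show that any model $M\subseteq I$ of $P^I$ is a prefixed point of $\Phi_P(\bullet,I)$, whence $I\subseteq M$ by Knaster--Tarski. You are also right to flag the reading of $I\models b(r)$ in the reduct: the paper only defines $I\models B$ as $B\subseteq I$ for ordinary programs, and under that literal reading the theorem fails (e.g.\ a single rule $a\xleftarrow{-1}b$ with $\theta(a)=-1/2$ has answer set $\{a\}$ but empty reduct), so the firing-condition reading you adopt is the only one under which the statement is true --- this is a genuine gap in the paper's definitions rather than in your proof.
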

\begin{proof} 
\todo[inline]{}
\end{proof}

The following counterexample shows that the converse of \prettyref{t:FLP} fails in general.

\begin{example} 
\todo[inline]{}
\end{example}

\section{Related work}

In neural probabilistic logic programs \cite{todo}, neural networks parametrise probabilistic logic programs \cite{todo}.

Deep probabilistic programming \shortcite{Tran17,Bingham19}... 

DeepStochLog \cite{todo}...

DeepSeaProbLog \cite{todo}... continuous random variables ... 

To the best of our knowledge, none of the mentioned neural-symbolic formalisms use neurons as basic building blocks of programs... 

In a sense, our approach is dual to the well-known \textit{core method} where logic programs are mapped to neural nets... \cite{todo}

\fi
\end{document}